\newtheorem{theorem}{Theorem}
\newtheorem{lemma}{Lemma}
\tikzstyle{int}=[draw, fill=white!20, minimum size=2em]
\tikzstyle{init} = [pin edge={to-,thin,black}]
\DeclareRobustCommand*{\IEEEauthorrefmark}[1]{%
	\raisebox{0pt}[0pt][0pt]{\textsuperscript{\footnotesize\ensuremath{#1}}}}
\newcommand{\ve}[1]{\boldsymbol{\mathbf{#1}}}
\newcommand{\hi}{\text{High}}
\newcommand{\lo}{\text{Low}}
\newcommand{\set}[1]{\mathcal{#1}}
\acrodef{AoI}{Age of Information}
\acrodef{IoT}{Internet of Things}
\acrodef{DPC}{Dynamic Power Control}
\acrodef{LDF}{Largest-Debt-First}
\acrodef{VR}{Virtual Reality}
\acrodef{eMBB}{enhance Mobile Brodband}
\acrodef{URLLC}{Ultra Reliable Low Latency}
\acrodef{TTI}{Transmission Time Interval}
\acrodef{QoS}{Quality of Service}
\title{
	Dynamic Power Control for Time-Critical Networking with Heterogeneous Traffic
}
\author{
	\IEEEauthorblockN{Emmanouil Fountoulakis\IEEEauthorrefmark{1},
		Nikolaos Pappas\IEEEauthorrefmark{1},
		Anthony Ephremides \IEEEauthorrefmark{1,2}
		\IEEEauthorblockA{\IEEEauthorrefmark{1} Department of Science and Technology, Link{\"o}ping University, Sweden}
		\IEEEauthorblockA{\IEEEauthorrefmark{2} Electrical and Computer Engineering Department, University of Maryland, College Park}
		E-mails: \{emmanouil.fountoulakis, nikolaos.pappas\}@liu.se, etony@umd.edu}
}
\begin{document}
	
	\maketitle
	
	\thispagestyle{empty}
	\pagestyle{empty}
	
	\begin{abstract}
		Future wireless networks will be characterized by heterogeneous traffic requirements. 
		Such requirements can be low-latency or minimum-throughput. Therefore, the network has to adjust to  different needs. Usually, users with low-latency requirements have to deliver their demand within a specific time frame, i.e., before a deadline, and they co-exist with throughput oriented users. In addition, the users are mobile and they share the same wireless channel. Therefore, they have to adjust their power transmission to achieve reliable communication. However, due to the limited-power budget of wireless mobile devices, a power-efficient scheduling scheme is required by the network. In this work, we cast a stochastic network optimization problem for minimizing the packet drop rate while guaranteeing a minimum-throughput and taking into account the limited-power capabilities of the users. We apply tools from Lyapunov optimization theory in order to provide an algorithm, named \ac{DPC} algorithm, that solves the formulated problem in real-time. It is proved that the \ac{DPC} algorithm gives a solution arbitrarily close to the optimal one. Simulation results show that our algorithm outperforms the baseline \ac{LDF} algorithm for short deadlines and multiple users.
	\end{abstract}
	
	\begin{IEEEkeywords}
		Deadline-constrained traffic, dynamic algorithms, heterogeneous traffic, Lyapunov optimization, power-efficient algorithms, scheduling.
	\end{IEEEkeywords}
	
	\IEEEpeerreviewmaketitle 
	
	\section{Introduction}

	5G and beyond networks are poised to support a mixed set of applications that require different types of services. There are two main categories of applications. The first category includes applications that require bandwidth-hungry services and the second includes delay-sensitive applications.
	The second category differentiates the current networks from future networks. These applications require low-latency services and increase the need for time-critical networking.  
	In time-critical networking, applications require to deliver their demands within a specific time duration \cite{ericssosreview2020}. In other words, each packet or a batch of packets has a deadline within which must be transmitted, otherwise, it is dropped and removed from the system \cite{hou2013packets}. This is connected with the notion of \textit{timely throughput}. Timely throughput measures the long-term time average number of successful deliveries before the deadline expiration \cite{TheoryQoS2009, lashgari2013timely}.
	Each time-critical application belongs to a different category. For example, motion control, smart grid control, and process monitoring belong to the industrial control category. Furthermore, the growing popularity of real-time media applications increases the need for designing networks that can offer services with low latency. Such applications are media production, interactive \ac{VR}, cloud computing, etc, that are under the umbrella of the Tactile Internet \cite{holland2019ieee}.
	
	With the pervasiveness of mobile communications, such applications need to perform over wireless devices. In order to achieve reliable communication, the devices have to adapt their power transmission according to channel conditions. However, many devices may have a limited power budget. Therefore, energy-efficient communications have become a very important issue. In this work, we propose a scheduling algorithm that handles a heterogeneous set of users with heterogeneous traffic. In particular, we consider a network with deadline-constrained users and users with minimum-throughput requirements, with a limited-power budget. We provide an algorithm that solves the scheduling problem in real-time. We prove that the obtained solution is arbitrarily close to the optimal.
	
	\subsection{Related works}
	Delay-constrained network optimization and performance analysis have been extensively investigated  \cite{SurveyDelayAwareResourceAllocation}. A variety of approaches have been applied to different scenarios. There is a line of works that consider the control of the maximum number of retransmissions before the deadline expiration. In \cite{BambosPerfEvalu}, the authors consider a user transmitting packets over a wireless channel to a receiver. An optimal scheduling scheme is proposed that provides the optimal number of retransmission for a packet. In \cite{nomikos2018deadline}, the authors consider users with packets with deadlines in a random-access network. They show how the number of maximum retransmissions affects the packet drop rate. In \cite{faridi2008distortion}, the authors consider a single transmitter that transmits symbols to a receiver. Each symbol has a deadline and a corresponding distortion function. The authors consider the distortion-minimization problem while fulfilling deadline constraints.
	
	Furthermore, energy and power efficient scheduling schemes for  delay-constrained traffic have attracted a lot of attention over last the years \cite{fountoulakis2018dynamic, choi2019dynamic,AnOnlineAlgorBorkar, GoyalPowerConstrained2003,BambosDeadlines,OptimalTransmissionSchedulingTsitsiklis}. In \cite{fountoulakis2018dynamic}, the authors consider the minimization of drop rate for users with limited power-budget. They propose an approximated algorithm that performs in real time. In \cite{choi2019dynamic}, the authors propose an algorithm that minimizes the time average power consumption while guaranteeing minimum throughput and reducing the queueing delay. They also consider a hybrid system multiple access system where the scheduler decides if the transmitter serves a user by by orthogonal multiple access or non-orthogonal multiple access. In \cite{AnOnlineAlgorBorkar, GoyalPowerConstrained2003}, the authors utilize markov decision theory to provide an optimal energy-efficient algorithm for delay-constrained user. 
	
	Lyapunov optimization theory has been widely applied  for developing dynamic algorithms that schedule users with packets with deadlines. In \cite{PowerControlTeped,ewaisha2017optimal, ElAzzouni2020,kim2015optimal}, the authors consider the rate maximization under power and delay constraints. In \cite{PowerControlTeped}, the authors consider the power allocation for users with hard-deadline constraints. In \cite{ewaisha2017optimal}, the author consider the rate maximization of non-real time users while satisfying packet drop rate for users with packets with deadlines. 
	In \cite{ElAzzouni2020,kim2015optimal}, the authors consider packet with deadlines for scheduling real-time traffic in wireless environment.   
	A novel approach for minimizing packet drop rate while guaranteeing stability is provided in \cite{neely2013dynamic}. The authors combine tools from Lyapunov optimization theory and markov decision processes in order to develop an optimal algorithm for minimizing drop rate under stability constraints. However, the algorithm is able to solve small network scenarios because of the curse of dimensionality problem.

	Besides delay-constrained traffic management, throughput-optimal algorithms have been developed over the years. Following the seminal work in \cite{tassiulas1990stability}, many researchers developed different solutions for the throughput-maximization problem by proposing a variety of approaches \cite{li2014throughput, lan2019throughput,sadiq2009throughput}. In \cite{li2014throughput}, the authors consider the throughput-maximization while guaranteeing certain interservice times for all the links. They propose the time-since-last-service metric. They combine the last with the queue length of each user and they propose a max-weight policy based on Lyapunov optimization. In \cite{lan2019throughput,sadiq2009throughput}, the authors consider the throughput-maximization in networks with dynamic flows. More specifically, in \cite{lan2019throughput}, the authors consider a hybrid system with both persistent and dynamic flows. They provide a queue-maximum-weight based algorithm that guarantees throughput-optimality while reducing the latency. In \cite{sadiq2009throughput}, the authors consider a network with dynamic flows of random size and they arrive in random size at the base station. The service times for each flow varies randomly because of the wireless channel. The authors provide a delay-MaxWeight scheduler that is proved that is throughput-optimal. 
	
	Research on scheduling heterogeneous traffic with \ac{URLLC} users and \ac{eMBB} has attracted a lot of attention by the community \cite{you2018resource, TTIWiOpt,anand2020joint, anand2018resource,karimi2020low,khalifa2020low,avranas2019throughput,destounis2018scheduling}.
	In \cite{you2018resource,TTIWiOpt}, the authors show the benefits of flexible \ac{TTI} for scheduling users with different types of requirements. In \cite{anand2020joint}, the authors propose an algorithm that jointly schedules \ac{URLLC} and \ac{eMBB} traffic. They consider a slotted time system in which the slots are divided into mini-slots. They consider the frequency and mini-slots allocation over one slot. In \cite{anand2018resource}, the authors consider the resource allocation for \ac{URLLC} users. They study resource allocation for different scenarios: i) OFDMA system, ii) system that includes re-transmissions. In 
	\cite{karimi2020low,khalifa2020low},  the authors propose low-complexity algorithm for scheduling \ac{URLLC} users. The authors in\cite{avranas2019throughput} consider the throughput maximization and HARQ optimization for \ac{URLLC} users. Furthermore, reliable transmission is an important issue of \ac{URLLC} communications. In \cite{destounis2018scheduling}, the authors consider a network in which multiple unreliable transmissions are combined to achieve reliable latency. The authors model the problem as a constrained Markov decision problem, and they provide the optimal policy that is based on dynamic programming.
	
	\subsection{Contributions}
	In this work, we consider two sets of users with heterogeneous traffic and a limited-power budget. The first set includes users with packets with deadlines and the second set includes users with minimum-throughput requirements. We provide a dynamic algorithm that schedules the users in real-time and minimizes the drop rate while guaranteeing minimum-throughput and limited-power consumption. The contributions of this work are the following. 
	\begin{itemize}
		\item We formulate an optimization problem for minimizing the drop rate with minimum-throughput constraints and time average power consumption constraints. 
		\item We provide a novel objective function for minimizing the drop rate. The objective function does not take into account only if a packet is going to expire or not, but also the remaining time of a packet before its expiration.
		\item We apply tools from the Lyapunov optimization theory to satisfy the time average constraints: throughput and power consumption.
		\item We prove that the proposed algorithm always satisfies the long term constraints if the problem is feasible.
		\item The proposed algorithm is proved to provide a solution arbitrarily close to the optimal.
		\item Simulation results show that our algorithm outperforms the baseline algorithm proposed in \cite{TheoryQoS2009} for short deadlines and multiple users.
	\end{itemize}

	\section{System Model}\label{Sec:SytemModel}
	\begin{figure}[h!]
		\centering
		\includegraphics[scale=0.37]{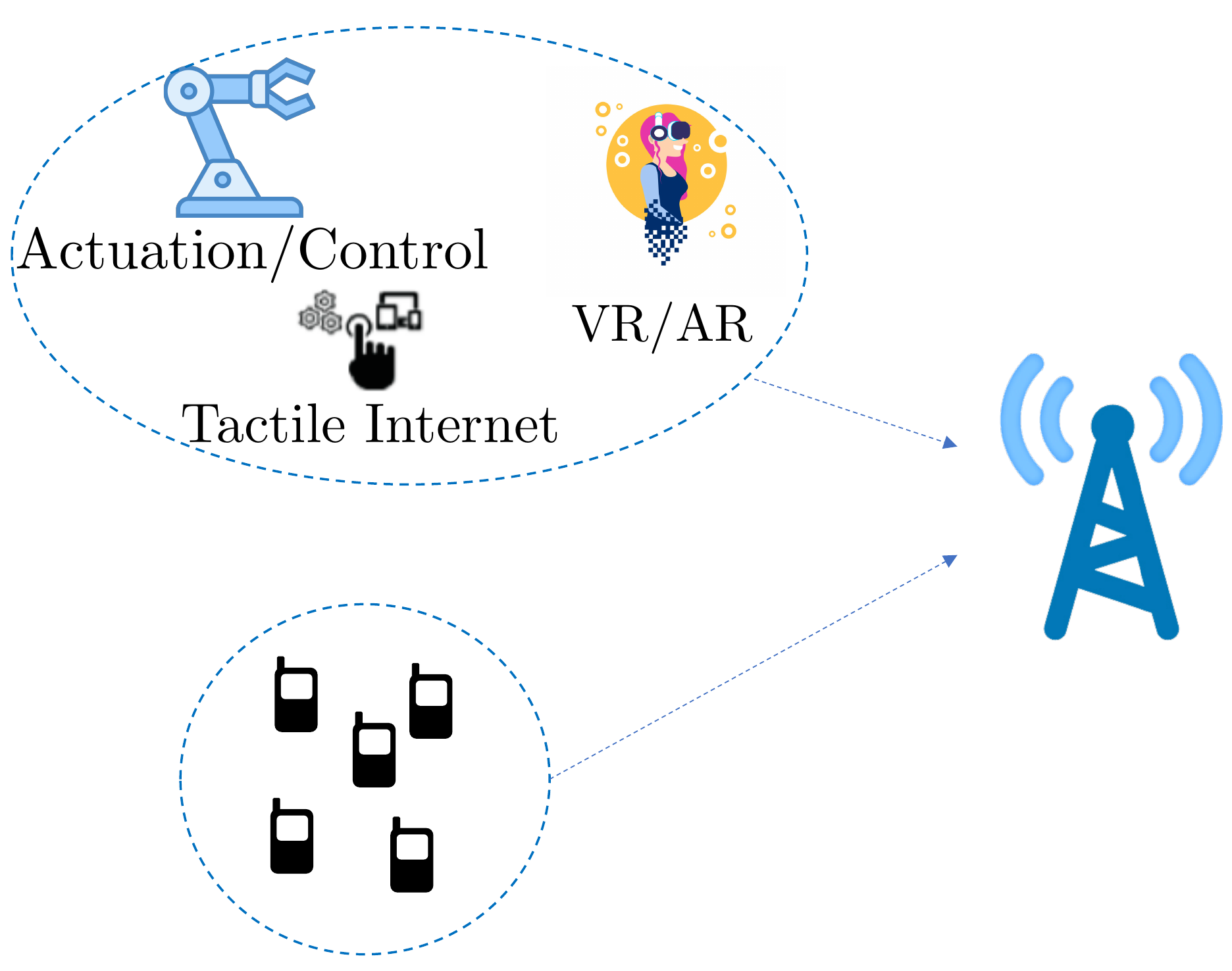}
		\caption{Example of the system model with time-critical users with packets with deadlines and users with minimum-throughput requirements.}
		\label{fig:systemmodel}
	\end{figure}
	\vspace{1mm}
	We consider $N$ users transmitting packets to a single receiver over wireless fading channels. Let $\mathcal{N}\triangleq \left\{1,\dots,N\right\}$ be the set of the total users in the system. Time is assumed to be slotted, and $t \in \mathbb{Z}$ denotes the $t^{\text{th}}$ slot. 
	\begin{table*}[t!]\caption{Notation Table.}
		\begin{center}
			\begin{tabular}{r c l l p{5cm}}
				\toprule
				&$\mathcal{N}$    & Set of the total users in the system                           	&  $\mathcal{P}_{i}(t)$     & Set of selectable power levels of user $i$\\
				&$\mathcal{R}$    & Set of the deadline-constrained users							 &          $Z_u(t)$			  & Length of throughput-dept queue of user $u$\\
				&$\mathcal{U}$    & Set of the minimum-throughput requirements users  	 &          $f_r(t)$ 				& Cost function for user $r$\\
				&$t$              & $t^{\text{th}}$ slot                                 								 &       $\mu_{r}(t)$            & Power allocation indicator of user $i$\\
				&$Q_{r}(t)$       & Number of packets in queue $r$                       &  $\mathcal{P}(t)$         & Set of power constraints for $\mathbf{p}(t)$\\
				&$\pi_{r}$          & Packet arrival probability of user $i$                      &  $D_{r}(t)$               & Packet drop indicator of user $r$\\
				&$\alpha_{r}(t)$  & Packet arrival indicator of user $r$                 &  $\overline{D}_{r}$       & Packet drop rate of user $r$\\
				&$m_{r}$          & Deadline of packet of user $r$                       &  $\overline{p}_{r}$       & Average power consumption of user $r$\\
				&$d_{r}(t)$       & Number of slots left before the deadline of user $r$ & $X_{r}(t)$                & Length of virtual queue of user $i$\\
				&$\mathbf{S}(t)$  & Channel states                                       & $L(\cdot)$                & Quadratic Lyapunov function \\
				& $\mathbf{p}(t)$ & Power allocation vector                              & $\Delta(L(\cdot))$        & Lyapunov drift\\
				& $\gamma_{i}$    & Allowed average power consumption for user $i$       & $\bm{\alpha}(t)$             & Packet arrival indicator vector\\
				\bottomrule
			\end{tabular}
		\end{center}
		\label{tab:TableOfNotationForMyResearch}
	\end{table*}

	We consider the users to be synchronized and controlled by a scheduler. In our system model, at most one user can transmit at each time slot.
	We consider two sets of users. The first set includes users that have arrivals of packets with deadlines. We denote the set of users with packets with deadlines by $\mathcal{R} \subseteq \mathcal{N}$. The second set of users, $\mathcal{U} \subseteq \mathcal{N}$, contains the users that have throughput requirements. We consider that each user $u \in \mathcal{U}$ is saturated and therefore, it has always packet to transmit. Note that $\mathcal{U}\cup \mathcal{R} = \mathcal{N}$ and $\mathcal{U}\cap \mathcal{R} = \emptyset$. An example of our system model is shown in Fig. \ref{fig:systemmodel}.
	
	For the deadline-constrained users, each packet that arrives in the queue of the users has a deadline by which the packet must be transmitted. Otherwise, it is dropped and removed from the system. We assume that the  deadlines of the packets in the same queue are the same. However, deadlines of different queues may vary. We denote the packet deadline of the $r^{\text{th}}$ queue with $m_{r} \in \mathbb{Z}_{+}\text{, }\forall r \in \mathcal{R}$.  Let $d_{r}(t)$ be the number of slots left in the $t^{\text{th}}$ slot before the packet that is at the head of queue $r$ expires.
	Let $Q_{r}(t)$ be the number of packets in queue $r$ in the $t^{\text{th}}$ slot. A packet arrives with probability $\pi_r$ at every time slot in the queue of  user $r$. Let $\bm{\alpha}(t)\triangleq \left\{\alpha_{r}(t)\right\}_{r\in \mathcal{R}}$, where $\alpha_{r}(t) \in \left\{0,1\right\}$, represents the packet arrival process for each user $r$ in the $t^{\text{th}}$ slot. The random variables of packet arrival process are independent and identically  distributed (i.i.d.). Let $\lambda_r$ denote the arrival rate for user $r$ and $\mathbb{E}\{\alpha_r(t)\} = \lambda_r$.  Furthermore, at most one packet can be transmitted at each time slot and no collisions are allowed. In each queue of every user $r \in \mathcal{R}$, packets are served in the order that they arrive following the First In First Out (FIFO) discipline.

	We assume that the channel state at the beginning of each time slot is known. The channel state remains constant within one slot but it changes from slot to slot. Let $\mathbf{S}(t)\triangleq \left\{S_{i}(t)\right\}_{i\in \mathcal{N}}$ represent the channel state for each user $i$ during slot $t$. Also the channel can be either in ``Bad" state (deep fading) or in ``Good" state (mild fading). The possible channel states of each user $i$ are described by the set $\mathcal{S}\triangleq\left\{\text{B},\text{G}\right\}$, and $S_{i}(t)\in \mathcal{S}$, $\forall i \in \mathcal{N}$. For simplicity, we assume that the random variables of the channel process $\mathbf{S}(t)$ are i.i.d. from one slot to the next. 
	
	Let $\ve{p}(t)\triangleq \left[p_{1}(t),\ldots,p_{N}(t)\right]$ denote the power allocation vector in the $t^{\text{th}}$ slot. We consider a set of discrete power levels $\left\{0, P^{(\lo)}, P^{(\hi)}\right\}$.
	The required power to have a successfull transmission under ``Bad" and ``Good" channel condition is denoted by $P^{\text{High}}$ and $P^{\text{Low}}$, respectively.
	At each time slot, the set of selectable power levels $\set{P}_{i}(t)$ for each user is conditioned on the channel state $S_i(t)$. For example, if the current channel state is ``Bad", then $P^{\text{(Low)}}$ cannot be selected. Thus, we have 
	\begin{align}\label{IndividualPowSet}
	p_{i}(t) \in \begin{cases}
	\left\{0, P^{\text{(High)}}\right\}\text{, if } S_i(t) = \text{B}\\
	\left\{0, P^{\text{(Low)}}\right\}\text{, if } S_i(t) = \text{G}
	\end{cases}\text{, }\forall i\in\set{N}\text{.}
	\end{align}
	Let $\mu_{i}(t)$ be the power allocation, or packet serving, indicator for the user $i$ in the $t^{\text{th}}$ slot, we have
	\begin{align}\label{AllocIndic}
	\mu_{i}(t) \triangleq \begin{cases}
	1\text{, if } p_i(t)>0\\
	0\text{, otherwise}
	\end{cases}\text{, }\forall i\in\set{N}\text{.}
	\end{align}
	%
	At most one packet can be transmitted in a timeslot $t$, i.e., the vector $\mathbf{p}(t)$ has at most one non-zero element. The set of power constraints for $\ve{p}(t)$ is then defined by 
	\begin{equation}\label{eqn:GeneralPowSet}
	\set{P}(t)\triangleq \left\{\ve{p}(t): \sum_{i=1}^N \mathbf{1}_{\{\mu_i(t)=1\}}\leq 1\right\}\text{,}
	\end{equation}
	where $\mathbf{1}_{\{\cdot\}}$ denotes the indicator function. 
	For each user $r \in \mathcal{R}$, a packet is dropped if its deadline has expired. Since the queue follows FIFO discipline, a packet is dropped under the following conditions: 1) it is at the head of the queue; 2) the remaining number of the slots to serve the packet is $1$; and 3) power is not assigned to user $r$ at the current slot. Let $D_{r}(t)$ be the indicator of the packet drop for user $r$ at time $t$.
	The queue evolution for each user $r \in \mathcal{R}$ is described as
	\begin{align}\label{QueueEvolutionDeadline}
	Q_{r}(t+1) = \max \left[Q_{r}(t)- \mu_{r}(t),0 \right] 
	+ \alpha_{r}(t)-D_{r}(t)\text{, } \forall i \in \mathcal{R}\text{.}
	\end{align}
	\vspace{1mm}
	We define the packet drop rate for each user $r$ $\in$ $\mathcal{R}$,  the average power consumption for each user $i \in \mathcal{N}$, and the throughput for each user $u \in \mathcal{U}$ as
	\begin{align}
	\overline{D}_{r} & \triangleq \lim\limits_{t\rightarrow\infty}  \overline{D}_{r}(t)\text{, } \forall r \in \mathcal{R}\text{,}\label{DropRate}\\
	\overline{p}_{i} & \triangleq \lim\limits_{t\rightarrow \infty} \overline{p}_{i}(t)\text{, }
	\forall i \in \mathcal{N}\text{,}\label{def: avepower}\\
	\overline{\mu}_u & = \lim\limits_{t\rightarrow \infty}  \bar{\mu}_u (t)\text{, } \forall u\in \mathcal{U}\label{def: throughtput}\text{,}
	\end{align}
	respectively, where, $\overline{D}_{r}(\tau)=\frac{1}{t}\sum\limits_{\tau=0}^{t-1} D_{r}(t)$, $\overline{p}_{i}(t)=\frac{1}{t}\sum\limits_{\tau=0}^{t-1} p_{i}(\tau)$, and $\bar{\mu}_u(t) = \frac{1}{t}\sum\limits_{\tau=0}^{t-1}\mu_u(\tau)$. The packet drop rate represents the average number of dropped packets per time slot. The average power consumption represents the average of transmit power over all time slots. The throughput represents the average served packets per time slot for each user $u \in \mathcal{U}$.
	
	These metrics are connected and we will show in the following sections how the average power consumption affects the packet drop rate and the throughput.

	\section{Problem Formulation}\label{sec:formulation}
	Our goal is to achieve the minimum drop rate for deadline-constrained users while providing a minimum throughput for each user $u \in \mathcal{U}$ and keeping the average power consumption for every user below a threshold. To this end, we provide the following stochastic optimization problem
	\begin{subequations}
		\begin{align}\label{Opt.Probl}
		\min\limits_{\bm{p}(t)} \quad & \sum\limits_{r\in\mathcal{R}} \overline{D}_{r}\\\label{PowerConstr}
		\text{s.~t.} \quad & \overline{p}_{i}\leq \gamma_{i}\text{, } \forall i \in  \mathcal{N}\text{,}\\
		\quad &\bar{\mu}_u\geq \delta_u\text{, } u\in\mathcal{U}\label{ThrConst}\text{,}\\
		\quad & \bm{p}(t) \in \mathcal{P}(t)\text{,}
		\end{align}
	\end{subequations}
	where $\gamma_{i} \in \left[0,P^{\text{(High)}}\right]$ indicates the allowed average power consumption for each user $i$. Also, $\delta_u$ denotes the minimum throughput requirement for each user $u \in \mathcal{U}$. The constraint in (\ref{PowerConstr}) ensures that average power consumption of each user $i$ remains below $\gamma_{i}$ power units.
	
	The formulation above represents our intended goal which is the minimization of the packet drop rate. However, the objective function in (\ref{Opt.Probl})  makes the solution approach non-trivial. The decision variable, $\mathbf{p}(t)$ (power allocation), is optimized slot-by-slot for minimization of the objective function that is defined over infinite time horizon. We have to cope with one critical point: we do not have prior knowledge about the future states of the channel and packet arrivals in the system. Therefore, we are not able to predict the values of the objective function in the future slots in order to decide on the power allocation that minimizes the cost. We aim to design a function whose future values are affected by the current decision and the remaining expiration time of the packets. 
	To this end, we introduce a function incorporating the relative difference between the packet deadline $m_{r}$ and the number of remaining future slots $(d_{r}(t)-1)$ before its expiration as described below
	\begin{equation}\label{NewFunction}
	f_{r}(t) \triangleq \frac{m_{r}-(d_{r}(t)-1)}{m_{r}} \mathbf{1}_{\left\{\mu_{r}(t)=0\right\}}\text{, } \forall r \in \mathcal{R}\text{.}
	\end{equation}
	The function in (\ref{NewFunction}) takes its extreme value, $f_{r}(t)=0$, when a packet of user $r \in \mathcal{R}$ is served, or $f_{r}(t)=1$ when a packet of user $r \in \mathcal{R}$ is dropped. Therefore, that function takes the same values with those of (\ref{DropRate}) in the extreme cases. In addition, the function in (\ref{NewFunction}) assigns the cost according to the remaining time of a packet to expire  in the intermediate states, i.e., when a packet is waiting in the queue. The cost increases when there is less time left for serving the packet with respect to the defined deadline. The time average of $f_{r}(t)$ is 
	\begin{align}\label{eqn:NewObj}
	\overline{f_{r}} \triangleq \lim\limits_{t\rightarrow \infty} \overline{f}_{r}(t)\text{, } \forall r \in \mathcal{R}\text{,}
	\end{align}
	where $\overline{f}_{r}(t)\triangleq \frac{1}{t} \sum\limits_{\tau=0}^{t-1} f_r(\tau)$ and
	\begin{align}
	f  =  \sum\limits_{r \in \mathcal{R}} f_r(t)\text{.}
	\end{align}
	Finally, we formulate a minimization problem by using \eqref{eqn:NewObj} as shown below
	\begin{subequations}
		\label{Eq.Probl}
		\begin{align}
		\min\limits_{\bm{p}(t)} \quad & \sum\limits_{r\in\mathcal{R}} \overline{f}_{r}\\\label{PowerConstrEq}
		\text{s.~t.} \quad & \overline{p}_{i}\leq \gamma_{i}\text{, } \forall i \in  \mathcal{N}\text{,}\\
		\quad &\bar{\mu}_u\geq \delta_u\text{, } u\in\mathcal{U}\label{ThrConstEq}\text{,}\\
		\quad & \bm{p}(t) \in \mathcal{P}(t)\text{.}
		\end{align}
	\end{subequations}

	\section{Proposed Approximate Solution} 
	The problem in (\ref{Eq.Probl}) includes time average constraints. In order to satisfy these constraints, we aim to develop a policy that uses techniques different from standard optimization methods based on static and deterministic models. Our approach is based on Lyapunov optimization theory \cite{NeelyBook}.
	
	In particular, we apply the technique developed in \cite{NeelyEnergyOptimalControl} and further discussed in \cite{NeelyBook} and \cite{TassiulasNeelyNow} in order to develop a policy that ensures that the constraints in \eqref{PowerConstrEq} and \eqref{ThrConstEq}  are satisfied. 
	
	Each inequality constraint in  \eqref{PowerConstrEq} and \eqref{ThrConstEq} is mapped to a virtual queue. We show below that the power constraint and minimum throughput constraints problems are transformed into  queue stability problems. 
	
	
	Before describing the motivation behind the mapping of average constraints in \eqref{PowerConstrEq} and \eqref{ThrConstEq} to virtual queues, let us recall one basic theorem that comes from the general theory of stability of stochastic processes \cite{MarkovChains}. Consider a system with $K$ queues. The number of unfinished jobs of queue $i$ is denoted by $q_{k}(t)$, and
	$\mathbf{q}(t) = \left\{q_{k}(t)\right\}_{k=1}^{K}$. The Lyapunov function and the Lyapunov drift are denoted by $L(\mathbf{q}(t))$ and
	$\Delta(L(\mathbf{q}(t))) \triangleq E\left\{L(\mathbf{q}(t+1))-L(\mathbf{q}(t)) | \mathbf{q}(t)\right\}$ respectively \cite{MarkovChains}.
	Below we provide the definition of the Lyapunov function \cite{MarkovChains}.
	
	\textit{Definition 1 (Lyapunov function):} A function $L: \mathbb{R}^{K}\rightarrow \mathbb{R}$ is said to be a Lyapunov function if it has the following properties
	\begin{itemize}
		\item $L(\mathbf{x})\geq 0\text{, } \forall \mathbf{x} \in \mathbb{R}^{K}$,
		\item It is non-decreasing in any of its arguments,
		\item $L(\mathbf{x}) \rightarrow + \infty$, as $||\mathbf{x}||\rightarrow + \infty$.
	\end{itemize}
	\begin{theorem}[Lyapunov Drift]
		\label{ThLyapunov}
		If there exist positive values $B$, $\epsilon$ such that for all time slots $t$ we have
		$\Delta (L(\mathbf{q}(t)) \leq B - \epsilon \sum\limits_{k=1}^{K} q_{k}(t)\text{,}$
		then the system $\mathbf{q}(t)$ is \textit{strongly} stable. 
	\end{theorem}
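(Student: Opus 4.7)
The plan is to prove strong stability via the standard telescoping argument on the drift inequality, since the hypothesis already gives us a per-slot bound of the right form. Recall that strong stability of $\mathbf{q}(t)$ means $\limsup_{t\to\infty}\frac{1}{t}\sum_{\tau=0}^{t-1}\sum_{k=1}^{K}\mathbb{E}\{q_k(\tau)\}<\infty$, so it suffices to produce an explicit bound on this time-averaged expected backlog in terms of the constants $B$ and $\epsilon$.

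First, I would take expectations in the hypothesis, rewriting the drift as $\mathbb{E}\{L(\mathbf{q}(t+1))\}-\mathbb{E}\{L(\mathbf{q}(t))\}\leq B-\epsilon\sum_{k=1}^{K}\mathbb{E}\{q_k(t)\}$, where the iterated expectation removes the conditioning on $\mathbf{q}(t)$ used in the definition of $\Delta(L(\mathbf{q}(t)))$. Next, I would sum this inequality over $\tau=0,1,\ldots,t-1$. The left-hand side telescopes to $\mathbb{E}\{L(\mathbf{q}(t))\}-\mathbb{E}\{L(\mathbf{q}(0))\}$, giving
\begin{equation}
\mathbb{E}\{L(\mathbf{q}(t))\}-\mathbb{E}\{L(\mathbf{q}(0))\}\leq Bt-\epsilon\sum_{\tau=0}^{t-1}\sum_{k=1}^{K}\mathbb{E}\{q_k(\tau)\}.
\end{equation}

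Now I would invoke the non-negativity clause of Definition~1, which gives $\mathbb{E}\{L(\mathbf{q}(t))\}\geq 0$, and rearrange to isolate the backlog sum:
\begin{equation}
\epsilon\sum_{\tau=0}^{t-1}\sum_{k=1}^{K}\mathbb{E}\{q_k(\tau)\}\leq Bt+\mathbb{E}\{L(\mathbf{q}(0))\}.
\end{equation}
Assuming a finite initial condition so that $\mathbb{E}\{L(\mathbf{q}(0))\}<\infty$, dividing by $\epsilon t$ and taking $\limsup_{t\to\infty}$ makes the initial-condition term vanish and yields $\limsup_{t\to\infty}\frac{1}{t}\sum_{\tau=0}^{t-1}\sum_{k=1}^{K}\mathbb{E}\{q_k(\tau)\}\leq B/\epsilon$, which is exactly strong stability with an explicit constant.

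There is no real obstacle in this argument: the telescoping is immediate, and the only ingredient beyond algebra is $L\geq 0$, which is built into the definition of a Lyapunov function. The subtle point worth flagging is the implicit assumption $\mathbb{E}\{L(\mathbf{q}(0))\}<\infty$ (or deterministic initial backlog), without which the bound is vacuous; this is standard in the Lyapunov-optimization literature cited in the paper and can be stated as a hypothesis on $\mathbf{q}(0)$.
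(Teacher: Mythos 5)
Your proof is correct: the telescoping argument with iterated expectations, the non-negativity of $L$, and the division by $\epsilon t$ is exactly the standard derivation of the drift theorem, and your flag about requiring $\mathbb{E}\{L(\mathbf{q}(0))\}<\infty$ is the right caveat. Note that the paper itself does not prove Theorem~\ref{ThLyapunov} at all --- it quotes it as a known result from the stochastic stability literature --- so your argument simply reproduces the canonical proof from those references (yielding the explicit bound $B/\epsilon$ on the time-averaged expected backlog), and there is no genuinely different route in the paper to compare against.
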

	The intuition behind Theorem 1 is that if we have a queueing system, and we provide a scheduling scheme such that the Lyapunov drift is bounded and the sum of the length of the queues are multiplied by a negative value, then the system is stable.
	Our goal is to find a scheduling scheme for which the inequality of Theorem 1 holds for our application.
	
	Let $\left\{X_{i}(t)\right\}_{i\in \mathcal{N}}$ and $\left\{Z_{u}(t)\right\}_{u\in \mathcal{U}}$ be the virtual queues associated with constraints \eqref{PowerConstrEq} and \eqref{ThrConstEq}, respectively. We update each virtual queue $X_i(t)$ at each time slot $t$ as
	\begin{align}
	X_{i}(t+1) = \max\left[X_{i}(t)-\gamma_{i}, 0\right] + p_{i}(t)\text{,}
	\label{Xevolution}
	\end{align}
	and each virtual queue $Z_u(t)$ as
	\begin{align}
	Z_{u}(t+1) = \max\left[Z_u(t)-\mu_u(t),0\right] + \delta_u\text{.}
	\label{Zevolution}
	\end{align}
	Process $X_{i}(t)$  can be viewed as a queue  with ``arrivals" $p_{i}(t)$ and ``service rate" $\gamma_{i}$. Process $Z_u(t)$ can be also viewed as a queue with ``arrivals" $\delta_u$ and ``service rate" $\mu_u(t)$.
	
	We will show that the average constraints in \eqref{PowerConstrEq} and \eqref{ThrConstEq} are transformed into queue stability problems. Then, we develop a dynamic algorithm and we prove that the algorithm satisfies Theorem \ref{ThLyapunov} and achieves stability.
	\begin{lemma}
		\label{Th:rateStable}
		If $X_{i}(t)$ and $Z_{u}(t)$ are \textit{rate stable}\footnote{A discrete time process $Q(t)$ is \textit{rate stable} if $\lim\limits_{t\rightarrow \infty}\frac{Q(t)}{t}=0$ with probability $1$ \cite{NeelyBook}.}, then the constraints in  \eqref{PowerConstrEq} and \eqref{ThrConstEq} are satisfied.
	\end{lemma}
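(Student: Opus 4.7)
The plan is to use the standard virtual queue argument from Lyapunov optimization: for each recursion of the form $Q(t+1)=\max[Q(t)-b,0]+a(t)$, dropping the $\max$ gives a clean telescoping inequality that ties the time-averaged arrivals and services to $Q(t)/t$, and rate stability then forces the desired average inequality.

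Concretely, I would first treat the power constraint. From the virtual queue recursion $X_i(t+1)=\max[X_i(t)-\gamma_i,0]+p_i(t)$, drop the $\max$ to obtain the bound $X_i(t+1)\ge X_i(t)-\gamma_i+p_i(t)$. Summing this inequality over $\tau=0,\dots,t-1$ telescopes the $X_i$ terms and yields
\begin{equation*}
X_i(t)-X_i(0)\ \ge\ \sum_{\tau=0}^{t-1} p_i(\tau)\ -\ t\,\gamma_i.
\end{equation*}
Dividing by $t$, recognizing $\frac{1}{t}\sum_{\tau=0}^{t-1}p_i(\tau)=\overline{p}_i(t)$, passing to the limit $t\to\infty$, and using rate stability $X_i(t)/t\to 0$ together with $X_i(0)/t\to 0$ gives $0\ge \overline{p}_i-\gamma_i$, which is exactly \eqref{PowerConstrEq}.

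Next I would repeat the same manipulation for the throughput virtual queue. From $Z_u(t+1)=\max[Z_u(t)-\mu_u(t),0]+\delta_u\ge Z_u(t)-\mu_u(t)+\delta_u$, summing over $\tau=0,\dots,t-1$ yields
\begin{equation*}
Z_u(t)-Z_u(0)\ \ge\ t\,\delta_u\ -\ \sum_{\tau=0}^{t-1}\mu_u(\tau).
\end{equation*}
Dividing by $t$ and invoking rate stability of $Z_u(t)$ gives $0\ge \delta_u-\overline{\mu}_u$, i.e. $\overline{\mu}_u\ge \delta_u$, recovering \eqref{ThrConstEq}. Doing this for every $i\in\mathcal N$ and every $u\in\mathcal U$ finishes the proof.

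There is really no major obstacle here; the only mildly delicate point is justifying that the time averages $\overline{p}_i$ and $\overline{\mu}_u$ defined as limits in \eqref{def: avepower}–\eqref{def: throughtput} exist (or, failing that, replacing the limits by $\liminf$/$\limsup$ as appropriate) so that the inequality passes to the limit cleanly. Since the statement of the lemma implicitly assumes the limits exist, this is essentially bookkeeping; the whole argument is a direct consequence of dropping the projection in the virtual queue update and telescoping.
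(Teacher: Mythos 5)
Your proof is correct and follows essentially the same route as the paper: the telescoping inequality you derive by dropping the projection is exactly the ``basic sample path property'' (Lemma 2.1 of Neely's book) that the paper's Appendix B invokes, after which both arguments divide by $t$ and apply rate stability of $X_i(t)$ and $Z_u(t)$.
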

	\begin{proof}
		See Appendix B.
	\end{proof}
	
	Note that strong stability implies all of the other forms of stability \cite[Chapter 2]{NeelyBook} including the rate stability. Therefore, the problem is transformed into a queue stability problem.
	In order to stabilize the virtual queues $X_{i}(t)\text{, } \forall i \in \mathcal{N}$ and $Z_u(t)\text{, } \forall u \in \mathcal{U}$, we first define the Lyapunov function as
	\begin{align}
	\label{Lfunction}
	L(\mathbf{\Theta}(t)) = \frac{1}{2} \sum\limits_{i\in\mathcal{N}} X^2_{i}(t) + \frac{1}{2}\sum\limits_{u\in\mathcal{R}}Z^2_u(t)\text{,}
	\end{align}
	where $\mathbf{\Theta}(t) = \left[ \{X_i(t)\}_{i\in\mathcal{N}}\text{, } \{Z_u(t)\}_{u\in\mathcal{U}}\right]$, and the Lyapunov drift as
	\begin{align}
	\label{Drift}
	\Delta(\mathbf{\Theta}(t)) \triangleq \mathbb{E}\left\{ L(\mathbf{\Theta}(t+1)) - L(\mathbf{\Theta}(t)) | \mathbf{\Theta}(t)  \right\}.
	\end{align}
	The above conditional expectation is with respect to the random channel states and the arrivals. 
	
	To minimize the time average of the desired cost $f_{r}(t)$ while stabilizing the virtual queues $X_{i}(t)$, $\forall i \in \mathcal{N}$, $Z_u(t)\text{, } \forall u \in \mathcal{U}$, we use the \textit{drift-plus-penalty} minimization approach introduced in \cite{TassiulasNeelyNow}. The approach seeks to minimize an upper bound on the following drift-plus-penalty expression at every slot $t$
	\begin{align}
	\label{driftPenaltyExpre}
	\Delta(\mathbf{\Theta}(t)) + V\sum\limits_{r \in \mathcal{R}} \mathbb{E}\left\{f_{r}(t)|\mathbf{\Theta}(t)\right\}\text{,}
	\end{align}
	where $V>0$ is an ``importance" weight to scale the penalty. An upper bound for the expression in \eqref{driftPenaltyExpre} is shown below
	\begin{align}\nonumber
	&\Delta(\bm{\Theta}(t)) + V\sum\limits_{r \in \mathcal{R}}\mathbb{E}\{f_r(t)| \bm{\Theta}(t)\} \leq B \\\nonumber &+\sum\limits_{i\in \mathcal{N}}\mathbb{E}\{X_{i}(t)(p_i(t) - \gamma_i)|\bm{\Theta}(t)\}\\\nonumber
	&+ \sum\limits_{r\in\mathcal{R}}\mathbb{E}\{Z_{u}(t)(\delta_u-\mu_u(t))|\bm{\Theta}(t)\}\\ &+V\sum\limits_{r\in\mathcal{R}}\mathbb{E}\{f_r(t)|\bm{\Theta}(t)\}   \text{,}
	\label{DriftBound}
	\end{align}
	where $B<\infty$ and $B\geq\frac{1}{2}\sum\limits_{i \in\mathcal{N}} \mathbb{E}\{p_i^2(t)+\gamma^2_i(t)|\bm{\Theta}(t)\} + \frac{1}{2}\sum\limits_{r\in\mathcal{R}}\mathbb{E}\{\delta^2_r + \mu_r^2(t)|\bm{\Theta}(t)\} + \frac{1}{2}\sum\limits_{r\in\mathcal{R}}\mathbb{E}\{\alpha_r^2(t) + \mu^2_r(t)|\mathbf{\Theta}(t)\}$.
	The complete derivation of the above bound can be found in Appendix A.
	
	\subsection{Min-Drift-Plus-Penalty Algorithm}
	We observe that the power allocation decision at each time slot does not affect the value of $B$. The minimum \ac{DPC} algorithm observes the virtual queue backlogs of the virtual queues, the actual queue, and the channel states and makes a control action to solve the following optimization problem
	\begin{subequations}
		\label{DPPOpt} 
		\begin{align}\nonumber
		\min\limits_{\bm{p}(t)} \quad &\sum\limits_{i\in \mathcal{N}}X_{i}(t)(p_i(t) - \gamma_i) + \sum\limits_{r\in\mathcal{R}}Z_{u}(t)(\delta_u-\mu_u(t))\\
		&+V\sum\limits_{r\in\mathcal{R}}f_r(t) \label{objDPP}\\
		\text{s.~t.}\quad &\bm{p}(t) \in \mathcal{P}(t) \label{PowerConsDPP} \text{.}
		\end{align}
	\end{subequations}
	\begin{lemma}
		The optimal solution to problem  \eqref{DPPOpt} minimizes the upper bound of the drift-plus-penalty expression given in the right-hand-side of \eqref{DriftBound}.
	\end{lemma}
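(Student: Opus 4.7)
The plan is to reduce the right-hand side of \eqref{DriftBound} to the objective in \eqref{DPPOpt} by peeling off the pieces that do not depend on the power-allocation vector $\bm{p}(t)$. First I would note that the bounding constant $B$, by its very construction, is independent of the control decision in slot $t$ and can therefore be dropped from the minimization. Next, because the virtual-queue lengths $X_i(t)$ and $Z_u(t)$ are measurable with respect to $\bm{\Theta}(t)$, and because $\gamma_i$, $\delta_u$, and $V$ are deterministic constants, I would pull each of them outside the conditional expectations; what remains inside is only the control-dependent quantities $p_i(t)$, $\mu_u(t)$, and $f_r(t)$.

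The crux is then the standard Lyapunov-optimization argument of opportunistically minimizing an expectation \cite{NeelyBook}: given the observed pair $(\bm{\Theta}(t),\mathbf{S}(t))$ at the beginning of the slot, a policy that selects $\bm{p}(t)\in\mathcal{P}(t)$ to minimize the realized value of
\[
\sum_{i\in\mathcal{N}} X_i(t)\bigl(p_i(t)-\gamma_i\bigr) + \sum_{u\in\mathcal{U}} Z_u(t)\bigl(\delta_u-\mu_u(t)\bigr) + V\sum_{r\in\mathcal{R}} f_r(t)
\]
pointwise also minimizes its conditional expectation, and therefore the entire upper bound in \eqref{DriftBound}, over all $\bm{\Theta}(t)$-measurable policies. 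This pointwise minimization is precisely problem \eqref{DPPOpt}, which delivers the claim.

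The main subtlety, more a bookkeeping point than a genuine obstacle, is that the feasible set $\mathcal{P}(t)$ is itself random through its dependence on the channel state via \eqref{IndividualPowSet}; the opportunistic-minimization step therefore has to be understood as taking the pointwise minimum over the realized $\mathcal{P}(t)$ after $\mathbf{S}(t)$ is observed. Since \eqref{DPPOpt} is already formulated in terms of the slot-$t$ realization of $\mathcal{P}(t)$, no additional work is required beyond verifying that the two per-slot objectives coincide term by term.
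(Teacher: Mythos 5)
Your proposal is correct and follows essentially the same route as the paper's proof: the paper likewise discards the decision-independent constant $B$, compares the realized per-slot objective under the optimizer $\mathbf{p}^{*}(t)$ of \eqref{DPPOpt} against any other (possibly randomized) admissible action pointwise, and then takes conditional expectations given $\bm{\Theta}(t)$ to conclude that the right-hand side of \eqref{DriftBound} is minimized. Your added remarks on the $\bm{\Theta}(t)$-measurability of the queue lengths and the channel-dependent feasible set $\mathcal{P}(t)$ are just a more explicit statement of the same opportunistic-minimization step.
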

	\begin{proof}
		See Appendix C.
	\end{proof}
	\begin{theorem}[Optimality of DPC algorithm and queue stability]
		The \ac{DPC} algorithm guarantees that the virtual and the actual queues are strongly stable and therefore, according to Lemma \ref{Th:rateStable}, the time average constraints in \eqref{PowerConstrEq} and \eqref{ThrConstEq} are satisfied. In particular, the time average expected value of the queues is bounded as
		\begin{align}\nonumber
		&\lim _{t \rightarrow \infty} \frac{1}{t} \sum_{\tau=0}^{t-1}\left(\sum_{i \in \mathcal{N}} \mathbb{E}\left\{X_{i}(t)\right\} + \sum_{u \in \mathcal{U}} \mathbb{E}\left\{Z_{u}(t)\right\}\right) \leq\\
		&\frac{B+V\left(f^{*}(\epsilon)-f^{\text{opt}}\right)}{\epsilon}\text{.}
		\end{align}
		In addition, the expected time average of function $f(t)$ is bounded as
		\begin{align}
		\lim_{t \rightarrow \infty} \sup \frac{1}{t} \sum_{\tau=0}^{t-1} \mathbb{E}\{f(\tau)\} \leq f^{\text {opt }}+\frac{B}{V}\text{.}
		\end{align}
		\label{Theorem:OptimalDPP}
	\end{theorem}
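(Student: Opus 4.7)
The plan is to combine the drift-plus-penalty bound in \eqref{DriftBound} with the minimizing property of the DPC algorithm from Lemma 2. Since DPC selects $\bm{p}(t) \in \mathcal{P}(t)$ to minimize the right-hand side of \eqref{DriftBound} at every slot, the resulting drift-plus-penalty is no larger than what would be obtained by any other feasible policy, including any \emph{stationary randomized} policy that bases its decisions only on the observed channel state $\mathbf{S}(t)$. This substitution principle is the engine of the argument.

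First I would invoke the standard existence result (Neely's framework, cf.\ \cite{NeelyBook}) that, assuming \eqref{Eq.Probl} is feasible, for every sufficiently small $\epsilon>0$ there exists a stationary randomized policy $\pi^{*}_{\epsilon}$ depending only on $\mathbf{S}(t)$ and producing a power vector $\bm{p}^{*}_{\epsilon}(t) \in \mathcal{P}(t)$ such that
\begin{align*}
\mathbb{E}\{p^{*}_{\epsilon,i}(t)\} &\le \gamma_{i} - \epsilon,\quad \forall i \in \mathcal{N},\\
\mathbb{E}\{\mu^{*}_{\epsilon,u}(t)\} &\ge \delta_{u} + \epsilon,\quad \forall u \in \mathcal{U},\\
\mathbb{E}\Big\{\sum_{r\in\mathcal{R}} f_{r}^{*}(t)\Big\} &= f^{*}(\epsilon),
\end{align*}
with $f^{*}(\epsilon) \to f^{\text{opt}}$ as $\epsilon \to 0$. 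Plugging this policy into the right-hand side of \eqref{DriftBound} and using the DPC optimality (Lemma 2) yields
\begin{equation*}
\Delta(\bm{\Theta}(t)) + V\sum_{r\in\mathcal{R}} \mathbb{E}\{f_{r}(t)\mid \bm{\Theta}(t)\} \le B + V f^{*}(\epsilon) - \epsilon\Big(\sum_{i\in\mathcal{N}} X_{i}(t) + \sum_{u\in\mathcal{U}} Z_{u}(t)\Big).
\end{equation*}

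The next step is to take total expectation, telescope the drift over $\tau = 0, \ldots, t-1$, divide by $t$, and use $L(\bm{\Theta}(0)) = 0$ together with $L(\bm{\Theta}(t)) \ge 0$. Rearrangement isolates $\frac{1}{t}\sum_{\tau=0}^{t-1}(\sum_{i}\mathbb{E}\{X_{i}(\tau)\} + \sum_{u}\mathbb{E}\{Z_{u}(\tau)\})$ on one side, yielding the first bound of the theorem after letting $t\to\infty$ and using $f^{*}(\epsilon) \ge f^{\text{opt}}$. Strong stability of $X_{i}(t)$ and $Z_{u}(t)$ follows, which by standard results implies rate stability; invoking Lemma~\ref{Th:rateStable} concludes that the constraints \eqref{PowerConstrEq} and \eqref{ThrConstEq} hold. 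For the penalty bound, I would discard the non-negative $\epsilon$-weighted queue terms on the RHS, telescope the same inequality, divide by $Vt$, and let $\epsilon \to 0$ so that $f^{*}(\epsilon) \to f^{\text{opt}}$, giving the $f^{\text{opt}} + B/V$ bound. Strong stability of the actual queues $Q_{r}(t)$ follows because arrivals are Bernoulli and drops $D_{r}(t)$ keep the queue backlog bounded deterministically by $m_{r}$.

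The main obstacle is the rigorous justification of the $\epsilon$-slack stationary randomized policy and the associated limit $f^{*}(\epsilon) \to f^{\text{opt}}$, since the cost $f_{r}(t)$ depends on the deadline countdown $d_{r}(t)$, which is a non-stationary function of past decisions. A careful argument requires either enlarging the policy class to stationary policies conditioned on $(\mathbf{S}(t), d_{r}(t))$ or appealing to the performance theorem for Lyapunov optimization under Markov-modulated dynamics (cf.\ \cite{NeelyBook}). The rest of the derivation is mechanical telescoping, but framing $f^{*}(\epsilon)$ correctly so that the substitution into \eqref{DriftBound} is valid is the delicate point.
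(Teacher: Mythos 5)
Your proposal follows essentially the same route as the paper's Appendix D: invoke an $\epsilon$-slack feasible (randomized) policy, substitute it into the right-hand side of \eqref{DriftBound} via the minimizing property of DPC (Lemma 2), then telescope the drift-plus-penalty inequality to obtain both the queue-backlog bound and the $f^{\text{opt}}+B/V$ penalty bound. Your additional remarks—justifying the existence of the $\epsilon$-slack stationary policy despite the dependence of $f_r(t)$ on $d_r(t)$, and noting that the actual queues $Q_r(t)$ are deterministically bounded by the deadlines—actually address points the paper's proof assumes without comment, so the proposal is correct and, if anything, slightly more careful.
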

	\begin{proof}
		See Appendix D.
	\end{proof}
	
	We summarize the steps of the \ac{DPC} algorithm that solves the power control problem in \eqref{DPPOpt}
	in Algorithm $1$. 
	
	\setlength{\textfloatsep}{0pt}
	\begin{algorithm}[!h]
		\nonumber
		\caption{DPC}\label{alg:MyAlgorithm}
		Input constant $V$,
		Initialization: $X_{i}(0)=0\text{, } \gamma_{i}\text{, }\forall i \in \mathcal{N}$, $Z_u(0)=0$, $\forall u \in \mathcal{U}$.\\
		\For{$t=1:\ldots$}{
			$MinObj\leftarrow \infty$\\
			\For{$i=1:(|\mathcal{N}|+1)$}{
				$p_{i}(t) \in \mathcal{P}(t)$, 
				Calculate $f_{r}(t)\text{, } \forall r \in \mathcal{R}$\\
				$Obj\leftarrow V\sum\limits_{r\in \mathcal{R}}f_{r}(t) + \sum\limits_{r\in\mathcal{R}}X_{r}(t)(p_i(t)-\gamma_r)$ + $\sum\limits_{u\in \mathcal{U}}Z_u(t)(\delta_u-\mu_{i}(t))$\\
				\If{MinObj$>$Obj}
				{$\mathbf{p}'(t)\leftarrow \mathbf{p}(t)$\\
					$MinObj\leftarrow Obj$}
			}
			$\mathbf{p}(t)\leftarrow \mathbf{p}'(t)$\\
			$X_{j}(t+1)\leftarrow \max\left[X_{j}(t)-\gamma_{j}, 0\right] + p_{j}(t)\text{, } \forall j \in \mathcal{N}$\\
			$Z_u(t+1) = \max\left[Z_u(t)-\mu_u(t),0\right] +\delta_u\text{, } \forall u \in \mathcal{U}$
		}
	\end{algorithm}
	In step $1$, we initialize the importance factor $V$ and the length of virtual queues $X_i(0)$, $\forall i \in\mathcal{N}$, and $Z_u(0)$, $\forall u \in \mathcal{U}$. We try all the possible power allocations from the set $\mathcal{P}$ in step $4$, and we find the corresponding value of the objective function in step $6$.  In step $7$, we check if the candidate power allocation gives smaller value of the objective so far. In steps $11-12$, we updated the virtual queues. After the search, we obtain the solution, $\bm{p}'(t)$, for which the objective function takes its minimum value, $MinObj$.

	\section{Simulation Results}
	\begin{figure}[t!]
		\begin{subfigure}{.5\textwidth}
			\centering
			\includegraphics[scale=0.45]{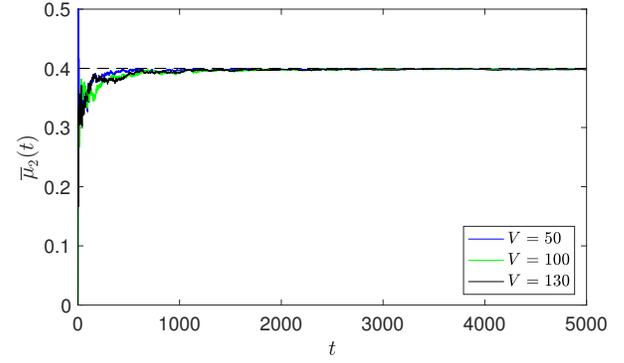}
			\caption{Throughput-constraints convergence of user $2$.}
			\label{Fig:Thrconvergence}
		\end{subfigure}%
		
		\centering
		\begin{subfigure}{.5\textwidth}
			\centering
			\includegraphics[scale=0.45]{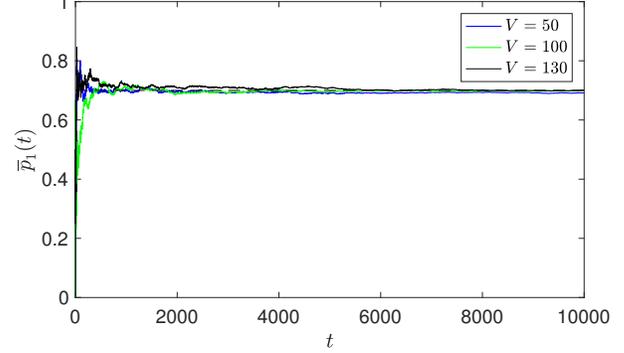}
			\caption{Average power consumption convergence of user $1$.}
			\label{Fig:PowerConvergence}
		\end{subfigure}
		
		\centering
		\begin{subfigure}{.5\textwidth}
			\centering
			\includegraphics[scale=0.45]{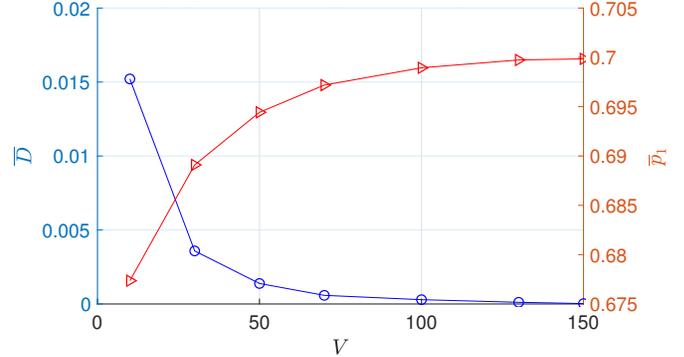}
			\caption{Trade-off between packet drop rate and average power consumption of user $1$ for different values of $V$.}
			\label{Fig:DropVvaluesPower}
		\end{subfigure}
		\caption{\ac{DPC} algorithm performance for different values of $V$. $\lambda_1=0.5$, $\gamma_1=0.7$, $\gamma_2=0.65$, $\delta_2=0.4$, $m=10$.}
		\label{Fig:TradeOffDropsPower}
	\end{figure}
	In this section, we provide results that show the performance of the \ac{DPC} algorithm regarding the packet drop rate and the convergence time for the time average constraints, i.e, throughput, and average power consumption constraints. We first show the results of a system with two users. Our goal is to show the performance of \ac{DPC} for different values of the importance factor $V$. Second, we compare our proposed algorithm with a baseline algorithm called \ac{LDF} algorithm proposed in \cite{TheoryQoS2009}.
	
	\subsection{Performance and Convergence of \ac{DPC} algorithm for different values of $V$}
	In Fig. \ref{Fig:TradeOffDropsPower}, we provide results for a system with two users; user $1$: deadline-constrained user, user $2$: user with minimum-throughput requirements. The average power thresholds are $\gamma_1=0.7$ and $\gamma_2=0.65$ for user $1$ and user $2$, respectively. The minimum-throughput requirements for user $2$ is $\delta_2=0.4$ packets/slot, and the deadlines of the packet of user $1$ is $m=10$ slots. The probability the channel to be in ``Good" state and in ``Bad" state is $0.4$ and $0.6$, respectively. The high level power and the low level power is $P^{\text{High}}=2$ power units and $P^{\text{Low}}=1$ power units, respectively.
	
	In Fig. \ref{Fig:Thrconvergence}, we show the convergence of the algorithm regarding the minimum-throughput constraints for different values of the importance factor $V$. We observe that as the value of $V$ increases, the time convergence increases as well. However, we observe that after approximately $2500$ slots, the algorithm converges and the minimum-throughput requirements are satisfied. In Fig. \ref{Fig:PowerConvergence}, we provide results for the converge of the \ac{DPC} algorithm regarding the average power consumption constraints of user $1$. The algorithm converges after approximately $8000$ slots for each value of $V$.  The probability the channel of user $1$ to be in ``Good" state  is $0.4$. Therefore, the user needs to transmit with high power level for a large portion of the time and that affects the average power consumption. Therefore, the average power consumption constraint with  $\gamma_2=0.7$ is a tight constraint and the algorithm needs more time to converge.
	
	In Fig. \ref{Fig:DropVvaluesPower}, we show the trade-off between packet drop rate and average power consumption of user $1$. We observe that as the value of $V$ increases the average power consumption increases and approaches threshold $\gamma_1$. For $V=10$, we observe that the average power consumption is far from the threshold. In this case, the value of the virtual queue that corresponds to the average power consumption is larger than the cost function for large period of time because the importance factor is relatively small. Therefore, the \ac{DPC} seeks to minimize the larger term of the objective function that is the value of the virtual queue. On the other hand, as we increase $V$, the cost function is weighted more and therefore, the $\ac{DPC}$ algorithm seeks to minimize the cost function which is the most weighted term in the objective function. However, the average power consumption remains always below  threshold  $\gamma_1$.
	
	\subsection{\ac{DPC} vs \ac{LDF}}
	\begin{figure}[t!]
		\begin{subfigure}{.5\textwidth}
			\centering
			\includegraphics[scale=0.45]{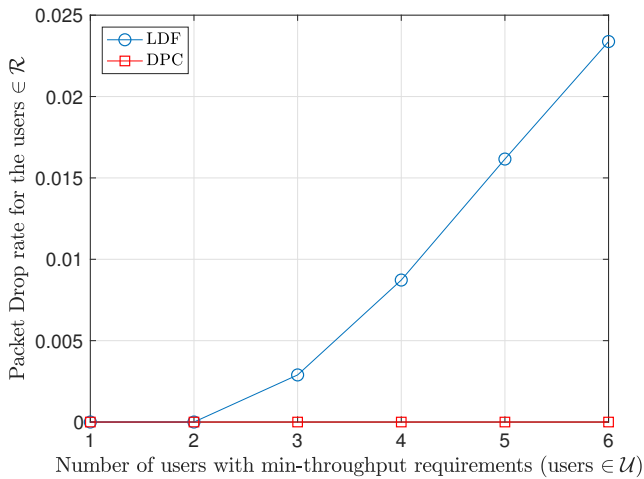}
			\caption{Packet deadline: $m=10$.}
			\label{fig:dropratelambda035m10}
		\end{subfigure}
		\centering
		\begin{subfigure}{.5\textwidth}
			\centering
			\includegraphics[scale=0.45]{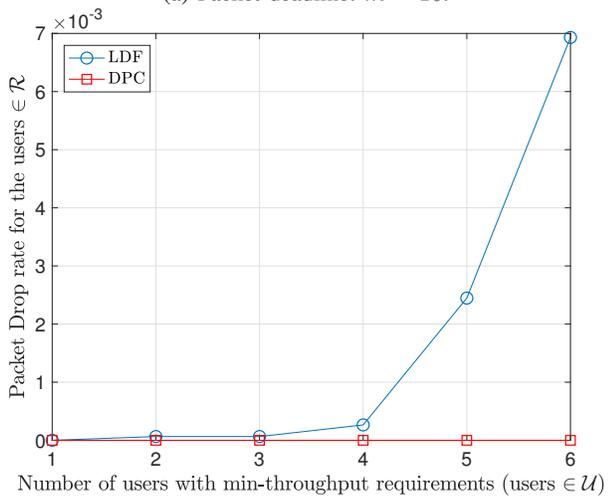}
			\caption{Packet deadline: $m=30$.}	
			\label{fig:dropratelambda035m30}
		\end{subfigure}
		\caption{\ac{DPC} vs \ac{LDF}. Drop rate comparison for different values of  packet deadline $m$.}
		\label{fig:dropratelambda035}
	\end{figure}
	
	\begin{figure}[t!]
		\begin{subfigure}{.5\textwidth}
			\centering
			\includegraphics[scale=0.45]{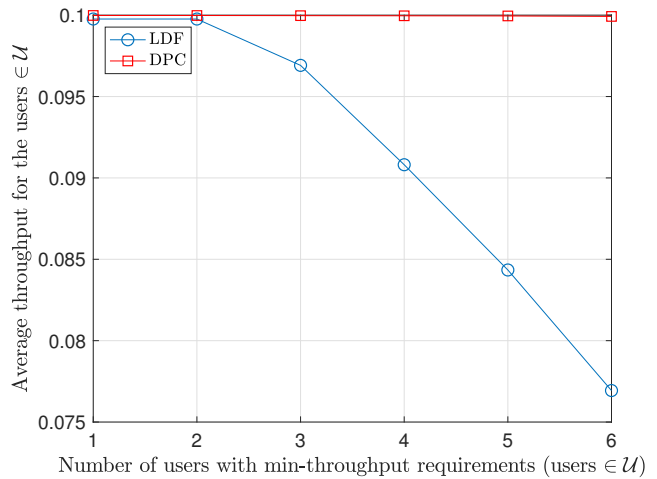}
			\caption{Packet deadline: $m=10$.}
			\label{fig:throughputlambda035m10}
		\end{subfigure}
		\centering
		\begin{subfigure}{.5\textwidth}
			\centering
			\includegraphics[scale=0.45]{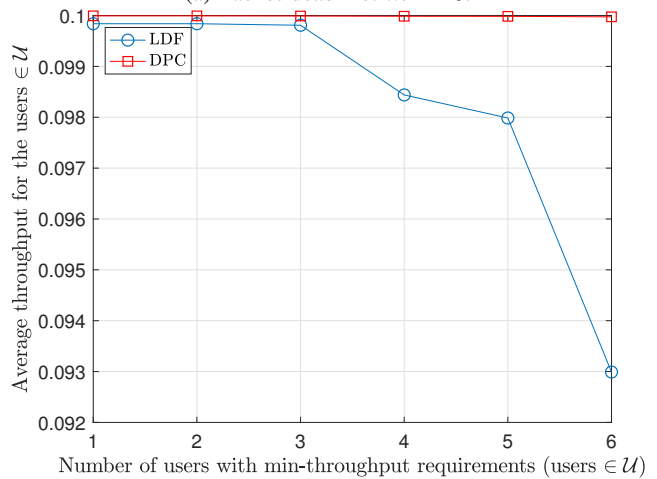}
			\caption{Packet deadline: $m=30$.}	
			\label{fig:throughputlambda035m30}
		\end{subfigure}
		\caption{\ac{DPC} vs \ac{LDF}. Throughput comparison for different values of packet deadline $m$.}
		\label{fig:throughputlambda035}
	\end{figure}
	In this subsection, we compare the performance of our algorithm with that of \ac{LDF}. The \ac{LDF} algorithm allocates power to the user with the \textit{largest-debt}. Algorithm \ac{LDF} selects, at each time slot $t$, the node with the highest value of $y_i(t)$, where $y_i(t)$ is the throughput debt and is defined as
	\begin{align}
	y_i(t+1) = tq_i - \sum\limits_\tau^t \mu_{i}(t) \label{eq:throughputdebt}\text{,}
	\end{align}
	where $q_i$ is the throughput requirements for user $i$. In our case, for the users with throughput requirements, $q_i=\delta_i$. For the deadline-constrained users, $q_i$ will be equal to the percentage of the desired served packets for user with deadlines. For example, if our goal is to achieve zero drop rate we set $q_i=\lambda_i$. However, this is not always feasible, i.e., zero drop rate and satisfaction of the throughput constraints. Therefore, in this case, we get higher drop rate and lower throughput. Note that the \ac{LDF} algorithm does not account for the average power constraints. It was shown in \cite{TheoryQoS2009} that \ac{LDF} is throughput optimal when the problem is feasible for systems with users with throughput requirements. 
	
	In this set up, we consider one user with packets with deadlines and a set with multiple users with minimum-throughput requirements. The probability the channel to be in ``good" state is equal to $0.9$ for all the users. In order to observe a fair comparison between the algorithm, we consider that the average power threshold is $2$ for all the users. Therefore, the average power constraint for every user is always satisfied. The arrival rate for user $1$ is $\lambda_1=0.35$ packets/slot.
	
	In Fig. \ref{fig:dropratelambda035}, we compare the performance of the algorithms regarding the packet drop rate as the number of users with minimum-throughput requirements increases. In Fig. \ref{fig:dropratelambda035m10}, the deadline for the packets of user $1$ is $m=10$. We observe that the \ac{DPC} algorithm outperforms the \ac{LDF} algorithm in terms of drop rate as the number of minimum-throughput requirements increases. In Fig. \ref{fig:dropratelambda035m30}, the deadline for the packets of user $1$ is $m=30$. We observe that the performance of \ac{LDF} has been improved. However, the \ac{DPC} outperforms \ac{LDF} in this case as well. We observe that \ac{LDF} is more sensitive on the size of the deadline of the packets. 
	
	In Fig. \ref{fig:throughputlambda035}, we compare the performance of algorithm regarding the average total throughput of users with minimum-throughput requirements. In Fig. \ref{fig:throughputlambda035m10}, we show results for packets deadline that is $m=10$. Also in this case, we observe that the \ac{DPC} algorithm outperforms the \ac{LDF}. However, for larger deadlines, the performance of the \ac{LDF} is improved, as shown in Fig. \ref{fig:throughputlambda035m30}.

	\begin{figure}[t!]
		\centering
		\includegraphics[scale=0.5]{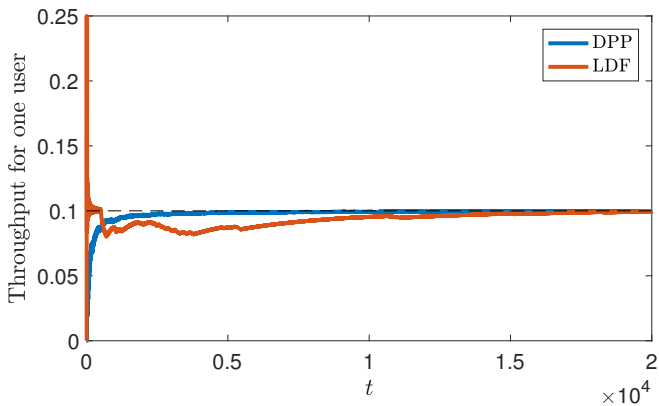}
		\caption{\ac{DPC} vs \ac{LDF}. Min-throughput requirements convergence. Packets deadline: $m=100$ slots.}
		\label{fig:convergencem100}
	\end{figure}
	In Fig. \ref{fig:convergencem100}, we provide results for the convergence time of throughput requirements of one user. In this set up, we consider a system with one user with packet with deadlines and six users with minimum-throughput requirements. In the previous results that are shown in Figs. \ref{fig:throughputlambda035} and \ref{fig:dropratelambda035}, we observe that as we increase the value of $m$, we get a better performance of the \ac{LDF} algorithm. In the system of which the results are shown in Fig. \ref{fig:convergencem100}, we set a very large value to the deadlines of user $1$. 
	We observe that the \ac{DPC} algorithm converges much earlier than \ac{LDF} algorithm. However, both algorithms converge after many slots. This explains the phenomenon of worst performance of \ac{LDF} for small values of $m$. Since the \ac{LDF} algorithm allocates power to the users with largest throughput-debt, it does not consider the remaining slots for each packet. Therefore, when the deadline, $m$, is small, the packets expire before the algorithm converges in terms of throughput requirements  and the drop rate increases.
	
	\section{Conclusions}
	In this work, we consider a system with users with heterogeneous traffic. In particular, the system consists of two sets of users. The first set contains users with packets with deadlines and the second contains users with minimum-throughput requirements, all with limited power budget. All the users transmit their information over a wireless channel by adjusting their power transmission in order to achieve reliable communication. We consider the minimization of the packet drop rate while guaranteeing minimum-throughput. A dynamic algorithm is provided that solves the scheduling problem in real-time. We prove that our scheduling scheme provides a solution arbitrarily close to the optimal. Simulation results show that the proposed algorithm outperforms the baseline algorithm, \ac{LDF}, when the deadlines are short and it is faster in terms of convergence.
	
	\section{Acknowledgments}
	This work has been partially supported by the Swedish Research Council
	(VR) and by CENIIT.
	\begin{appendices}
		\section{Upper Bound on the Lyapunov Drift of DPC}
		Using the fact that \((\max [Q-b, 0]+A)^{2} \leq Q^{2}+A^{2}+b^{2}+\)
		\(2 Q(A-b),\) we rewrite  \eqref{Xevolution}, \eqref{Zevolution}, as 
		\begin{align}
		X_i^2(t+1)  &\leq X^2_i(t) +p^2_i + \gamma_i^2+2X_i(t)(p_i(t)-\gamma_i)\text{,}\label{Xsquare}\\
		Z_u^2(t+1)&\leq Z_u^2(t) +\delta_u^2 + \mu_u^2(t) +2Z_u(t)(\delta_u - \mu_u(t))\label{Zsquare}\text{,}
		\end{align} 
		respectively. Rearranging the terms in \eqref{Xsquare} and \eqref{Zsquare}, dividing them by two, and taking the summations, we obtain
		\begin{align}
		\nonumber
		\sum\limits_{i \in\mathcal{N}}\frac{X_i^2(t+1) - X_i^2(t) }{2} &\leq \sum\limits_{i \in\mathcal{R}} \frac{p_i^2(t)+\gamma_i^2}{2} \label{SumX}\\
		&+ \sum\limits_{i\in\mathcal{N}} X_{i}(t) (p_i(t)-\gamma_i) \\
		\sum\limits_{u\in\mathcal{U}}\frac{Z^2_u(t+1) - Z_u^2(t)}{2} & \leq \sum\limits_{u\in \mathcal{U}}\nonumber \frac{\delta_u^2+\mu_u^2(t)}{2} \\
		&+ \sum\limits_{u\in \mathcal{U}}Z_u(t)(\delta_u - \mu_u(t))\label{SumZ} \text{.}
		\end{align}
		Taking the conditional expectations in  \eqref{SumX} and \eqref{SumZ}, and adding them together, we obtain the bound for the Lyapunov drift in \eqref{DriftBound}. To prove that $B$ that is bounded, we have to find an example and a scheduling scheme for which $B$ takes its maximum value that is bounded. We consider the following set-up:
		\begin{itemize}
			\item The scheduler allocates power at every time slot with the maximum power level,
			\item $\gamma_i=P^\text{High}$, $\forall i \in  \mathcal{N}$,
			\item $\delta_u=1$\text{, } $\forall u \in \mathcal{U}$\text{,}
			\item $\lambda_i= 1$\text{, } $\forall i \in \mathcal{N}$\text{.}
		\end{itemize}
		Then, for the above scheduling scheme, we set $B = \frac{1}{2}\sum\limits_{i \in\mathcal{N}} \mathbb{E}\{p_i^2(t)+\gamma^2_i(t)|\bm{\Theta}(t)\} + \frac{1}{2}\sum\limits_{r\in\mathcal{R}}\mathbb{E}\{\delta^2_r + \mu_r^2(t)|\bm{\Theta}(t)\} + \frac{1}{2}\sum\limits_{r\in\mathcal{R}}\mathbb{E}\{\alpha_r^2(t) + \mu^2_r(t)|\mathbf{\Theta}(t)\} = 
		\frac{1}{2} |\mathcal{N} + 1|(P^{\text{High}})^2 + \frac{1}{2} |\mathcal{R}+1| + \frac{1}{2}|\mathcal{R}+1|<\infty\text{.}$ We observe that even in the scenario in which we take the maximum values of $\gamma_i$, $\delta_u$, and $\lambda_i$, $B$ is bounded.

		\section{Proof of Lemma 1}
		\begin{proof}
			Using the basic sample property \cite[Lemma 2.1, Chapter 2]{NeelyBook}, we have
			\begin{align}
			\frac{X_{i}(t)}{t}-\frac{X_{i}(0)}{t} &\geq  \frac{1}{t} \sum\limits_{\tau = 0} ^{t-1} p_{i}(\tau) - \frac{1}{t}\sum\limits_{\tau = 0} ^{t-1} \gamma_{i}\text{,}\\
			\frac{Z_u(t)}{t} - \frac{Z_u(0)}{t} & \geq \frac{1}{t} \sum\limits_{\tau=0}^{t-1} \delta_u - \frac{1}{t}\sum\limits_{\tau=0}^{t-1}\mu_u(t)\text{.}
			\end{align}
			Therefore, if $X_{i}(t)$ and $Z_{u}(t)$ are rate stable \footnote{A discrete time process $Q(t)$ is strongly stable if \(\limsup _{t \rightarrow \infty} \frac{1}{t} \sum_{\tau=0}^{t-1} \mathbb{E}\{|Q(\tau)|\}<\infty\), \cite{NeelyBook}.} , so that $\frac{X_{i}(t)}{t}\rightarrow 0\text{, } \forall i \in \mathcal{N}$, and $\frac{Z_u(t)}{t}\rightarrow 0$, $\forall u \in \mathcal{U}$, with probability 1, then constraints \eqref{PowerConstrEq} and \eqref{ThrConstEq} are satisfied with probability $1$ \cite{QueueStabilityNeely}.
		\end{proof} 
		\section{Proof of Lemma 2}
		\begin{proof}
			Let $\mathbf{p}(t)$ represent any, possibly randomized, power allocation decision made at slot $t$. Suppose  that $\mathbf{p}^{*}(t)$ is the optimal solution to problem  (\ref{DPPOpt}), and under action $\mathbf{p}^{*}(t)$ the value of $f_{i}(t)$ yields $f_{i}^{*}(t)$ and that of $\mu_{u}(t)$, $\mu^{*}_u(t)$. Then, we have
			
			\begin{small}
				\begin{align}\label{Ineq:OptBound}\nonumber
				&V f^{*}(t)  +  \sum\limits_{i\in\mathcal{N}}X_{i}(t)(p^*(t)-\gamma_i)  + \sum\limits_{u\in\mathcal{U}}Z_u(t)(\delta_u-\mu_u^*(t))\\
				&\leq 
				V f(t)  +  \sum\limits_{i\in\mathcal{N}}X_{i}(t)(p(t)-\gamma_i)  +\sum\limits_{u\in\mathcal{U}}Z_u(t)(\delta_u-\mu_u(t))
				\end{align}
			\end{small}
			Taking the conditional expectations of (\ref{Ineq:OptBound}), we have the result as
			\begin{align}\nonumber
			&V\mathbb{E}\{ f^{*}(t)|\bm{\Theta}(t)\}  +  \sum\limits_{i\in\mathcal{N}}\mathbb{E}\{X_{i}|\bm{\Theta}(t)\}(t)(p_i^*(t)-\gamma_i) \\\nonumber
			&+\sum\limits_{u\in\mathcal{U}}\mathbb{E}\{Z_u(t)|\bm{\Theta}(t)\}(\delta_u-\mu_u^*(t))\leq \\\nonumber
			&V\mathbb{E}\{ f(t)|\bm{\Theta}(t)\}  +  \sum\limits_{i\in\mathcal{N}}\mathbb{E}\{X_{i}|\bm{\Theta}(t)\}(t)(p_i(t)-\gamma_i) \\\nonumber
			&+\sum\limits_{u\in\mathcal{U}}\mathbb{E}\{Z_u(t)|\bm{\Theta}(t)\}(\delta_u-\mu_u(t)) \text{.}
			\end{align}
			
		\end{proof}

		\section{Proof of Theorem 2}
		\begin{proof}
			Suppose that a feasible policy $\omega$ exists, i.e., constraints \eqref{PowerConstrEq} and \eqref{ThrConstEq} are satisfied. Suppose that, for the $\omega$ policy, the followings hold
			\begin{align}
			\mathbb{E}\{p_i(t) - \gamma_i\} &\leq -\epsilon \label{ineqPowerEpsilon}\text{,}\\
			\mathbb{E}\{\delta_u - \mu_u(t)\} &\leq-\epsilon\label{ineqThEpsilon}\text{,}\\
			\mathbb{E} \{f^*(\epsilon)\} & = f^*(\epsilon)\text{,}
			\end{align}
			where $f^*(\epsilon)$ is a sub-optimal solution.
			Applying \eqref{ineqPowerEpsilon} and \eqref{ineqThEpsilon} into \eqref{DriftBound}, we obtain
			\begin{align}\nonumber
			&\mathbb{E}\{L(\bm{\Theta}(t+1)) \}- \mathbb{E}\{ L(\bm{\Theta}(t)) \}+ V\mathbb{E}\{f(t)\}\leq \\\nonumber
			&B -\epsilon \left(\sum\limits_{i\in\mathcal{N}}\mathbb{E}\{X_i(t)\}
			+ \sum\limits_{u\in\mathcal{U}}\mathbb{E}\{Z_u(t)\} \right)+Vf^*(\epsilon)\label{IneqDriftEpsilon}\text{,}
			\end{align}
			taking $\epsilon\rightarrow 0$ and the sum over $\tau=0, \ldots, t-1$ we obtain
			\begin{align}
			\frac{1}{t} \sum\limits_{\tau=0}^{t-1}\mathbb{E}\{f(\tau)\} \leq \frac{ -\mathbb{E}\{L(\bm{\Theta}(t))\} + \mathbb{E}\{L(\bm{\Theta}(0))\} +Bt}{Vt} + f^{\text{opt}} \text{,}
			\end{align}
			taking $t\rightarrow \infty$, we obtain
			\begin{align}
			\lim_{t \rightarrow \infty} \sup \frac{1}{t} \sum_{\tau=0}^{t-1} \mathbb{E}\{f(\tau)\} \leq f^{\text {opt }}+\frac{B}{V}\text{.}
			\end{align}
			That concludes the second part of Theorem \ref{Theorem:OptimalDPP}. In order to prove the stability of the queues, we manipulate \eqref{IneqDriftEpsilon}
			\begin{align}\nonumber
			&\left(\sum_{i \in \mathcal{N}} \mathbb{E}\left\{X_{i}(t)\right\}+\sum_{u \in \mathcal{U}} \mathbb{E}\left\{Z_{u}(t)\right\}\right)\leq\\
			&\frac{B}{\epsilon} - \frac{\mathbb{E}\{L(\mathbf{\Theta}(t+1))\}-\mathbb{E}\{L(\mathbf{\Theta}(t))\}}{\epsilon}  - \frac{V(f^*(\epsilon)   -f(t)    )}{\epsilon}\text{.}
			\end{align}
			By taking the sum over $\tau = 0, \ldots, t-1$ and divide by $t$, we obtain
			\begin{align}\nonumber
			& \frac{1}{t} \sum\limits_{\tau=0}^{t-1} \left(\sum_{i \in \mathcal{N}} \mathbb{E}\left\{X_{i}(t)\right\}+\sum_{u \in \mathcal{U}} \mathbb{E}\left\{Z_{u}(t)\right\}\right)\leq\\
			& \frac{B}{\epsilon} -  \frac{\mathbb{E}\{L(\boldsymbol{\Theta}(t))\}-\mathbb{E}\{L(\boldsymbol{\Theta}(0))\}}{t\epsilon}+\frac{V\left(f^{*}(\epsilon)-f(t)\right)}{\epsilon}\text{,}
			\end{align}
			neglecting the negative term and taking $t\rightarrow \infty$, we have
			\begin{align}\nonumber
			&\lim_{t\rightarrow \infty}\frac{1}{t} \sum\limits_{\tau=0}^{t-1} \left(\sum_{i \in \mathcal{N}} \mathbb{E}\left\{X_{i}(t)\right\}+\sum_{u \in \mathcal{U}} \mathbb{E}\left\{Z_{u}(t)\right\}\right)\leq\\
			&\frac{B+V\left(f^{*}(\epsilon)-f(t)\right)}{\epsilon}\text{.}
			\end{align}
			Consider that $\mathbb{E}\{f(t)\} \geq f^{\text{opt}}$, we obtain the final result as
			\begin{align}\nonumber
			&	\lim _{t \rightarrow \infty} \frac{1}{t} \sum_{\tau=0}^{t-1}\left(\sum_{i \in \mathcal{N}} \mathbb{E}\left\{X_{i}(t)\right\}+\sum_{u \in \mathcal{U}} \mathbb{E}\left\{Z_{u}(t)\right\}\right) \leq\\
			&	\frac{B+V\left(f^{*}(\epsilon)-f^{\text{opt}}\right)}{\epsilon}\text{.}
			\end{align} 
			This shows that the queues are strongly stable for $\epsilon>0$.
		\end{proof}

	\end{appendices}
	
	
	\bibliographystyle{IEEEtran}
	\bibliography{MyBib1}
	
\end{document}